\documentclass[aps,prd,reprint,nofootinbib,superscriptaddress]{revtex4-2}

\usepackage[utf8]{inputenc}
\usepackage[T2A,T1]{fontenc}
\usepackage[main=english]{babel}
\AtBeginDocument{\DeclareFontFamilySubstitution{T2A}{\rmdefault}{Tempora-TLF}}
\def\Bogolyubov{\textbf{\foreignlanguage{russian}{Б}}}

\usepackage[a4paper,margin=2cm]{geometry}

\usepackage{amsthm}
\newtheorem*{proposition}{Proposition}
\newtheorem*{theorem}{Theorem}

\usepackage{anyfontsize}
\usepackage[varvw]{newtx}
\usepackage{microtype}

\usepackage[dvipsnames]{xcolor}
\usepackage[hidelinks,colorlinks=true,allcolors=Blue]{hyperref}
\usepackage{orcidlink}
\usepackage{aas_macros}

\usepackage{mathtools}
\usepackage{braket}
\newcommand{\ketbra}[2]{\ket{#1}\!\!\bra{#2}}
\usepackage{tensor}
\let\t\tensor

\def\ee{\mathrm e}	% Euler’s constant
\def\ii{\mathrm i}	% Imaginary unit
\def\dd{\mathrm d}	% Leibniz’s differential
\def\half{\tfrac{1}{2}}

\let\bs\boldsymbol
\def\ha{\hat a}

\def\LieD{\mathfrak L}	% Lie derivatie
\def\Metric{g}
\def\metric{l}
\def\lapse{\zeta}

\def\shift{\xi}

\def\Killing{K}

\def\MMM{\textsc{mmm}}
\def\AB{\textsc{a{\tiny /}b}}
\def\alice{\textsc{a}}
\def\bob{\textsc{b}}
\def\charlie{\textsc{c}}

\def\sys{\text{sys}}
\def\aux{\text{aux}}
\def\corr{\text{cor}}

\def\X{\hat{\mathbb X}}
\def\Y{\hat{\mathbb X}}

\def\identity{\mathrm I}

\def\transpose{\textsc t}

\def\channel{\mathcal N}

\DeclareMathOperator{\fidelity}{\mathcal F}
\DeclareMathOperator{\trace}{tr}

\DeclareMathOperator{\linearHull}{span}

\usepackage[capitalize]{cleveref}

\hypersetup{
    pdftitle={Gravitational redshift as a quantum channel: modeling the effects of gravitational redshift in quantum optics},
    pdfauthor={Thomas Mieling, Andreas Wolfgang Schell, David Edward Bruschi},
}

\begin{document}

% \date{\today}
\title{Gravitational redshift as a quantum channel: \\
modeling the effects of gravitational redshift in quantum optics}

\author{Thomas Mieling \orcidlink{0000-0002-6905-0183}}
\email{t.mieling@fz-juelich.de}
\affiliation{Institute for Quantum Computing Analytics (PGI-12), Forschungszentrum Jülich, 52425 Jülich, Germany}
\author{Andreas Wolfgang Schell \orcidlink{0000-0003-0849-9558}}
\affiliation{Division of Light-Matter-Interaction, Johannes Kepler Universität Linz, 4040 Linz, Austria}
\author{David Edward Bruschi \orcidlink{0000-0002-3816-5439}}
\email{david.edward.bruschi@posteo.net}
\email{d.e.bruschi@fz-juelich.de}
\affiliation{Institute for Quantum Computing Analytics (PGI-12), Forschungszentrum Jülich, 52425 Jülich, Germany}
\affiliation{Theoretical Physics, Universität des Saarlandes, 66123 Saarbrücken, Germany}

\begin{abstract}
    The gravitational frequency shift of light is well understood in the theory of classical electromagnetism. Nevertheless, its description in quantum theory is not yet fully developed.
    Recent work pointed out inconsistencies in previously developed models aimed at describing the gravitational redshift as an effective multi-mode mixer (\MMM) acting on modes of light, but so far a complete solution of these issues was not obtained. Here, we identify the root cause of the \MMM\ model’s inconsistency and provide two complementary approaches to correct it: a  “natural” one from a field-theoretic perspective, and another adapted to the language of quantum mechanics of finite-dimensional systems.
    We show that the second approach allows for modeling of the redshift in a multi-mode transmission setup as a quantum channel that can be characterized using standard quantum information-theoretic techniques when restricting the input states to Gaussian states of light.
\end{abstract}

\maketitle

The gravitational redshift is a well-known prediction of Einstein’s theory of gravity \cite{1908JRE.....4..411E}, which has been experimentally verified in numerous setups \cite{1960PhRvL...4..337P,2020A&A...643A.146G,2023NatCo..14.4886Z}.
Within the framework of general relativity and classical field theory in curved space-times, explicit predictions for frequency shifts are typically derived using the laws of geometrical optics \cite{Wald:1984,2006AmJPh..74.1017H} or classical field equations such as the (massless) Klein--Gordon equation \cite{1998CQGra..15.1653M} as well as Maxwell's equations \cite{2002CQGra..19..571M,2003CQGra..20...85K}. The theoretical description of the gravitational redshift in the quantum realm, on the other hand, is the subject of ongoing research. Specifically, recent work modeled gravitational frequency shifts of single photons as effective linear multi-mode mixers (\MMM) acting on a set of ladder operators associated to various frequency spectra \cite{2023AnP...53500468B,2023JPhCS2531a2016A}. Such \MMM\ models were recently challenged in a study that pointed out the fact that the resulting transformation matrices are---contrary to what was expected in the original proposal---not unitary \cite{Leber_2026}.
While the central predictions derived from the \MMM\ model are indeed plausible, particularly in the regime of small redshifts, this criticism left the model without a firm mathematical basis.
Here, we demonstrate that the reason for this inconsistency lies in the fact that the frequency shift operator has no regular invariant subspaces of finite dimension, which provides an obstruction for constructing matrix representations.

We then present two alternatives to the current form of the \MMM\ model: (i) one approach working without matrix representations entirely, and (ii) one approach based on a unitary dilation of the non-unitary operator obtained by compressing the redshift operator to a finite subspace of mode functions.
The latter approach naturally allows for the application of standard quantum-optical and quantum-information-theoretic techniques within the covariance matrix formalism when restricting the input states to Gaussian states. In particular, considering multi-mode transmission protocols as done in the literature~\cite{2023AnP...53500468B,2023JPhCS2531a2016A}, this technique allows modeling the redshift as a quantum channel.
In turn, this naturally provides an efficient way to characterize the process from a theoretical perspective, and also suggests experimental setups for measuring its properties.

This paper is structured as follows:
\cref{s:wave packet transformations} describes the action of gravitational frequency shifts of wave packets with a finite spectral bandwidth and in \cref{s:MMM presentation and limits} we review the central idea of the \MMM\ model and prove a no-go theorem providing an obstruction for carrying out constructions as proposed in the original \MMM\ papers \cite{2023AnP...53500468B,2023JPhCS2531a2016A}. 
\Cref{s:compressions and dilations} then presents two alternative methods for correcting the \MMM\ model – one of which turns out to be a minor modification in the spirit of the recent article \cite{Leber_2026}.
The utility of this model is demonstrated in \cref{s:quantum channel}, where we explicitly construct the quantum channel associated to the transmission of Gaussian quantum states in a multi-mode setting. Finally, \cref{s:MMM review of results} assesses the \MMM\ model’s predictions on the basis of our refined model, and \cref{s:conclusions} closes with a summary and outlook of future applications of our findings.

\section{Wave-packet transformations}
\label{s:wave packet transformations}

In this work we focus on particles propagating within stationary space-times. 
A stationary gravitational field is described by a time-oriented Lorentzian four-manifold $(M, \t\Metric{_\mu_\nu})$ with a timelike Killing vector field $\t\Killing{^\mu}$. Locally, such space-times admit coordinates $(\t x{^\mu}) = (t, \t x{^i})$ relative to which $\t\Killing{^\mu} = (1, 0, 0, 0)$ and the metric takes the form
\begin{align}
	\Metric
		= - \lapse^2 (c \dd t - \t\shift{_i} \t{\dd x}{^i})^2
		  + \t\metric{_i_j} \t{\dd x}{^i} \t{\dd x}{^j},
\end{align}
where $c$ is the speed of light in vacuum, and the lapse $\lapse$, shift $\t\shift{_i}$, as well as the spatial metric $\t\metric{_i_j}$ are independent of the temporal coordinate $t$. For a detailed discussion of the geometric meaning of these quantities, we refer the reader to appropriate literature~\cite{2025PhRvR...7a3162M,2025PhRvA.112f2211M}.

Particles and photons are modeled in this work as localized excitations of (quantum) fields. In particular, we will be interested in the effects of gravitational redshift on the quantum state of photons exchanged between two sites as illustrated in \cref{fig:schematic AB}. Therefore, characterizing relevant properties of the fields of interest will allow us to model the effects of interest.  

In this context, a tensor field $\phi$ is said to be monochromatic if it is of the form $\phi(t, x) = \varphi(x) \ee^{- \ii \t \omega{_*} t}$, or equivalently if $\LieD_\Killing \phi = - \ii \t \omega{_*} \phi$, where $\LieD$ denotes the Lie derivative.
The frequency $\t \omega{_*}$ will henceforward be referred to as the Killing frequency, which is related to the frequency $\t\omega{_\alice}$ relative to a stationary observer at position $\t p{_\!_\alice}\in M$ by
\begin{align}
	\t\omega{_*} = \t\omega{_\alice} \t\lapse{_\alice},
\end{align}
where $\t\lapse{_\alice} = \lapse(\t p{_\!_\alice})$ is the lapse at the location $\t p{_\!_\alice}$.
This can be seen, for example, using the notion of generalized Killing vector fields in the sense of Ref.~\cite{2008CQGra..25w5020H}.
The observer’s four-velocity $\t u{^\mu}$ defines a unique such field $\t*\Killing{^\prime^\mu}$ by the requirements that $\t*\Killing{^\prime^\mu} = \t u{^\mu}$ and ${\t\nabla{_\mu} \t*\Killing{^\prime_\nu} - \t\nabla{_\nu} \t*\Killing{^\prime_\mu} = 0}$ hold \emph{along} the observer’s world-line. In the concrete setup considered here, one has $\t\Killing{^\prime^\mu} = \t\Killing{^\mu} / \t\lapse{_\alice}$ and the Lie derivative of $\phi$ along this field takes the form $\t\LieD{_{\Killing'}} \phi = (1/\t\lapse{_\alice}) \t\LieD{_\Killing} \phi = - \ii (\t\omega{_*} / \t\lapse{_\alice}) \phi \equiv - \ii \t\omega{_\alice} \phi$, as claimed.
As a consequence, the frequencies for two observers at positions $\t p{_\!_\alice}$ and $\t p{_\!_\bob}$ are related by
\begin{align}
	\t\omega{_\alice} \t\lapse{_\alice} = \t\omega{_\bob} \t\lapse{_\bob},
\end{align}
which is the well-know relation between frequencies as measured by two different observers in a general stationary space-time \cite{Wald:1995}.

\begin{figure}
    \centering
    \includegraphics[width=\columnwidth]{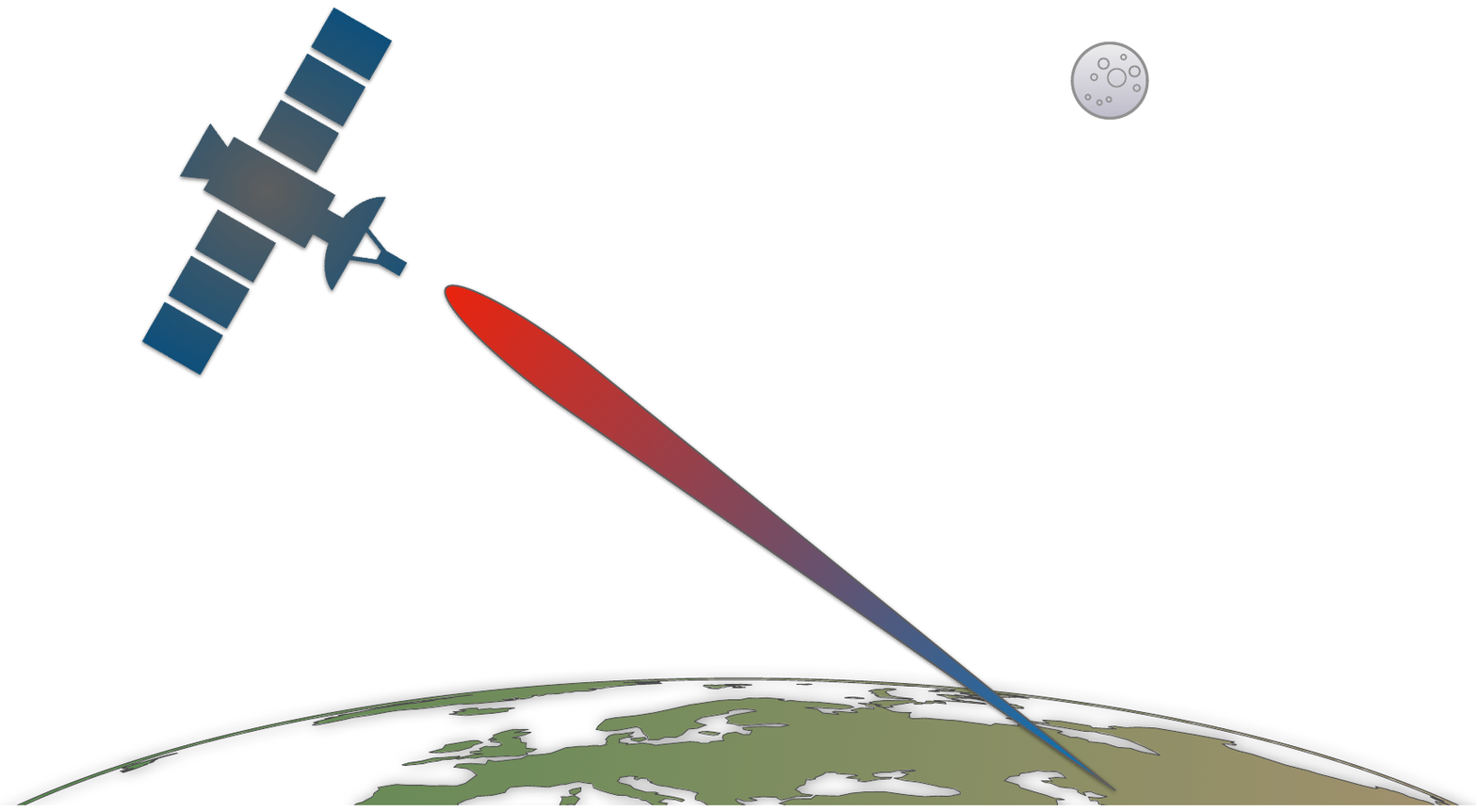}
    \caption{
        Schematic illustration of the generic setup considered in this work, where photons are exchanged between stationary observers in a time-independent gravitational field.
        Note that for satellite applications, the present analysis is restricted to geostationary orbits.
    }\label{fig:schematic AB}
\end{figure}

Such monochromatic fields typically do not have finite norms, as is standard for sharp-frequency excitations in field theory \cite{Srednicki:2007}, while wave-packets constructed by Fourier synthesis of such sharp modes do \cite{Itzykson:Zuber:1980}. In the following, we consider finite-norm wave packets of the form
\begin{align}
	\label{eq:wave packet from Killing distribution}
	\t\psi{_f}(t, x)
		= \int_0^\infty \dd \t \omega{_*}\, f(\t \omega{_*}) \phi_{\t \omega{_*}}(t, x),
\end{align}
where, for each $\t\omega{_*} > 0$, $\phi_{\t \omega{_*}}$ is a monochromatic solution with Killing-frequency $\t \omega{_*}$, and we assume that the Klein--Gordon product of two such wave packets takes the form
\begin{align}
    (\t\psi{_f}, \t\psi{_h})
	&= \int_0^\infty \dd \t \omega{_*} \, \overline{f(\t \omega{_*})} h(\t \omega{_*}).
\end{align}
The complex function $f$ in \cref{eq:wave packet from Killing distribution} encodes the contributions of the various Killing frequencies $\t\omega{_*}$ and can be translated to a distribution function of the frequencies $\t\omega{_\alice}$ relative to a stationary observer at position $\t p{_\!_\alice}$ by 
\begin{align}
    \psi_f(t, x)
        = \int_0^\infty \dd \t\omega{_\alice} \, \t\lapse{_\alice} f(\t\omega{_\alice} \t\lapse{_\alice}) \phi_{\t\omega{_\alice} \t\lapse{_\alice}}(t, x).
\end{align}
This equation suggests identifying $\t\lapse{_\alice} f(\t\omega{_\alice} \t\lapse{_\alice})$ with the frequency spectrum relative to the observer at $\t p{_\!_\alice}$. However, it is advantageous to use, instead, a rescaled version, namely
\begin{align}
	\label{eq:spectrum transformation}
	\t f{_\alice}(\t\omega{_\alice})
		= \sqrt{\t\lapse{_\alice}}\, f(\t\omega{_\alice} \t\lapse{_\alice}).
\end{align}
The reason for this particular choice is that inner products can then be written as
\begin{align*}
		\int_0^\infty \hspace{-0.75em} \dd \t\omega{_*} \, \overline{f(\t\omega{_*})} h(\t\omega{_*})
		&
		= \int_0^\infty \hspace{-0.75em} \dd \t\omega{_\alice} \, \t\lapse{_\alice} \overline{f(\t\omega{_\alice} \t\lapse{_\alice})} h(\t\omega{_\alice} \t\lapse{_\alice})
		\\
        &= \int_0^\infty \hspace{-0.75em} \dd \t\omega{_\alice} \, \overline{\sqrt{\t\lapse{_\alice}} f(\t\omega{_\alice} \t\lapse{_\alice})} \sqrt{\t\lapse{_\alice}} h(\t\omega{_\alice} \t\lapse{_\alice})
		\\
        &= \int_0^\infty \hspace{-0.75em} \dd \t\omega{_\alice} \, \overline{\t f{_\alice}(\t\omega{_\alice})} \t h{_\alice}(\t\omega{_\alice}),
\end{align*}
which implies that for any two stationary observers at positions $\t p{_\!_\alice}$ and $\t p{_\!_\bob}$ one has
\begin{align}
	\label{eq:form-invariance of inner products}
	\int_0^\infty \dd \omega \, \overline{\t f{_\alice}(\omega)} \t h{_\alice}(\omega)
	= \int_0^\infty \dd \omega \, \overline{\t f{_\bob}(\omega)} \t h{_\bob}(\omega).
\end{align}
In turn, as a consequence of \cref{eq:spectrum transformation}, one concludes that the two distribution functions $\t f{_\alice}$ and $\t f{_\bob}$ are related by
\begin{align}
	\t f{_\alice}(\t\omega{_\alice}) / \sqrt{\t\lapse{_\alice}}
		= f(\t \omega{_*})
		= \t f{_\bob}(\t\omega{_\bob}) / \sqrt{\t\lapse{_\bob}}.
\end{align}
Therefore, for all $\omega > 0$ one has
\begin{align}
	\label{eq:spectrum bob from alice}
	\t f{_\bob}(\omega) = \sqrt{\t\lapse{_\bob}/\t\lapse{_\alice}}\, \t f{_\alice}(\omega \t\lapse{_\bob} / \t\lapse{_\alice}).
\end{align}
Defining the operator $\t U{_\chi}$ as
\begin{align}
	\label{eq:U operator definition}
	(\t U{_\chi} f)(\omega)
		:=
		\chi^{-1} f(\chi^{-2} \omega) ,
\end{align}
one thus has $\t f{_\bob} = \t U{_\chi} \t f{_\alice}$ with $\chi: = \sqrt{\t\lapse{_\alice} / \t\lapse{_\bob}}$.\footnote{The notation $\chi$ for the redshift parameter is not standard, but is used here to keep consistency with the works in the related literature \cite{2014PhRvD..90d5041B,Leber_2026}.}
This result was originally formulated in the first work on this topic \cite{2014PhRvD..90d5041B}, albeit using a less general approach, and has also been re-derived in the recent article \cite{Leber_2026}.
The transformation $\t U{_\chi}$ is at the center of the analysis carried out in this work.

\Cref{eq:form-invariance of inner products} can be rephrased as saying that $\t U{_\chi}$ is a unitary transformation on $L^2(\mathbf R_+)$, since $(f, h) = (U f, U h)$ is the defining property of a unitary operator $U$.
Clearly, the operator $\t U{_\chi}$ satisfies the group composition relation
\begin{align}
    \label{eq:U composition}
	\t U{_{\chi}} \t U{_{\chi'}} = \t U{_{\chi \chi'}},
\end{align}
which is consistent with the expectation that a composition of redshift processes from $\t p{_\!_\alice}$ to $\t p{_\!_\bob}$ and from $\t p{_\!_\bob}$ to $\t p{_\!_\charlie}$ is equivalent to a single redshift process from $\t p{_\!_\alice}$ to $\t p{_\!_\charlie}$.
Note that $\t U{_{\chi=1}} = \identity$ (the identity operator), as well as
\begin{equation}
    \t U{_{\chi}^{-1}} = \t U{_{1/\chi}}.
\end{equation}
So far, the factor $\chi$ was introduced by considering the redshift of monochromatic fields, and the construction of $\t U{_\chi}$ that was consequently used to describe the redshift of wave packets could appear to be based primarily on formal rather than physical arguments \cite{2023AnP...53500468B}. The result of this construction, however, can be made physically tangible by studying the action of $\t U{_\chi}$ on the moments of a wave packet’s frequency distribution.
Indeed, by defining the $k$-th moment $\t\mu{_k}$ of a complex spectral function $f$ as
\begin{align}
	\label{eq:k-th moment def}
	\t\mu{_k}[f] &= \frac{
		\int_0^\infty \! \dd \omega\, \omega^k\, |f(\omega)|^2
	}{
		\int_0^\infty \! \dd \omega\, |f(\omega)|^2
	},
\end{align}
it is immediate to verify that
\begin{align}
	\label{eq:moment transformation}
	\t\mu{_k}[\t U{_\chi} f]
		= \chi^{2k} \t\mu{_k}[f].
\end{align}
Hence, the gravitational frequency shift transforms all moments and cumulants in a coherent manner.
The transformation for $k = 1$ was first highlighted in the recent article \cite{Leber_2026}.
More generally, \cref{eq:moment transformation} implies that both the moment-generating function
$F_{\textsc m}(\lambda) = \sum_{k \in \mathbf N} \t\mu{_k} \lambda^k / k!$
and the cumulant-generating function
$F_{\textsc c}(\lambda) = \ln F_{\textsc m}(\lambda) = \sum_{k \in \mathbf N} \t c{_k} \lambda^k / k!$
transform as
\begin{equation}
    F[\t U{_\chi} f](\lambda) = F[f](\chi^2 \lambda).
\end{equation}

A quantity of significant interest is the quantity $\Delta$, defined as the inner product a spectrum $f$ with its transformed version,  which was introduced in the literature as a way to quantify the overlap of single-photon states \cite{2021PhRvD.104h5015B} as
\begin{equation}\label{delta:definition}
    \Delta_f(\chi) := (f, \t U{_\chi} f).
\end{equation}
This quantity is central to quantum communication protocols involving entanglement swapping \cite{2014PhRvD..90d5041B} as well as other interferometry-based tasks \cite{2023AnP...53500468B}. More recently, this quantity was studied in the perturbative regime of small redshifts, where a deeper understanding of the role of its real and imaginary parts was achieved \cite{Leber_2026}.
We define ${\rho_f(\chi) := |\Delta_f(\chi)|}$, and note that it is immediate to see that ${\rho_f(\chi) = \rho_f(\chi^{-1})}$. If $\rho_f$ admits a Laurent expansion
\begin{equation}
    \rho_f(\chi)
        = \sum_{n=-\infty}^{+\infty} \frac{\Delta_n}{|n|!}\chi^n,
\end{equation}
the quantities $\Delta_n$ are uniquely determined by $f$. Using the fact that $\rho_f(\chi=1)=1$, it is immediate to see that $\rho_f(\chi)$ can be re-written as
\begin{equation}
    \rho_f(\chi)
        = 1
          + \sum_{n=1}^{+\infty} \frac{\Delta_n}{n!}\frac{(\chi^n-1)^2}{\chi^{n}}.
\end{equation}
This expression can also be cast in the form $\rho_f(\chi) = 1 + 4\sum_{n=1} \frac{\Delta_n}{n!}\sinh^2(n r)$ by introducing $r = \half \ln \chi$.

\section{The \texorpdfstring{\MMM}{MMM} model and its limits}
\label{s:MMM presentation and limits}

The key idea of the \MMM\ model, in its original formulation~\cite{2023JPhCS2531a2016A,2023AnP...53500468B}, is that the action of $\t U{_\chi}$ on a \emph{finite} set of spectral distribution functions $\{\t f{_i}\}_{i = 1}^n$ should be expressible in terms of a \emph{finite-dimensional} matrix $\t u{_\chi}$ with components $\t u{_I_J}(\chi) = (\t f{_I}, \t U{_\chi} \t f{_J})$, with $I, J \in \{1, \ldots, n, \perp\}$, where $\t f{_\perp}$ is a suitably chosen auxiliary “perpendicular mode.”

Recently, it has been pointed out that the \MMM\ model cannot be universally correct~\cite{Leber_2026}. This conclusion was reached by determining limiting behaviors of the matrix elements $\t u{_I_J}(\chi)$ for $\chi \to 0$ and $\chi \to \infty$, which turn out to be incompatible with the matrix $\t u{_\chi}$ being unitary.
It is the aim of this section to explain the cause for this phenomenon.

The fundamental source of inconsistency of the \MMM\ model is that the unitary operator $\t U{_\chi}$ has no regular finite-dimensional invariant subspace whenever $\chi \neq 1$. That is, there is no finite set of well-behaved functions $\t f{_I}$ and coefficients $\t u{_I_J}$ such that $\t U{_\chi} \t f{_I} = \t\sum{_J} \t f{_J} \t u{_J_I}$.
Since the ladder operators $\ha^\dagger(f)$ on Fock space depend linearly on the mode functions \cite{Derezinski_Gerard}, a similar statement applies to them.
This issue precludes the applicability of the \MMM\ model, as it requires the existence of such invariant subspaces to enable the redshift to transform \emph{finitely many} ladder operators $\hat{a}_j$, collected in the vector
$\X = (\t{\ha}{_1}, \ldots, \t{\ha}{_n}, \t{\ha}{_\perp})^\transpose$, into linear combinations of the \emph{same} operators them via the map $\X \mapsto \t*u{^\dagger} \X $, as  previously assumed, see Eq.~(8) of Ref.~\cite{2023AnP...53500468B} and Eq.~(39) of Ref.~\cite{2023JPhCS2531a2016A}.
We will give a rigorous demonstration of this claim below.

Before presenting our theorem and proof, we wish to point out the precise point at which the prior literature \cite{2023AnP...53500468B,2023JPhCS2531a2016A,Leber_2026} implicitly made the assumption of the existence of invariant subspaces.\footnote{The aforementioned papers worked with smeared ladder operators, but since their commutation relations are determined by the Klein–Gordon products of their mode functions, we here chose to work with normalized spectral functions $f$ for notational simplicity.}
Starting with a single distribution function $f$, a natural question is: how similar is the shifted profile $\t U{_\chi} f$ to the original profile $f$, where similarity is measured using the Klein–Gordon product, and how does the deviation look like?
The answer can be directly obtained by decomposing $\t U{_\chi} f$ as
\begin{align}
	\label{eq:U f decomposition}
	\t U{_\chi} f
		= \underbrace{(f, \t U{_\chi} f) f}_\text{proportional to $f$}
		+ \underbrace{[\t U{_\chi} f - (f, \t U{_\chi} f) f]}_\text{orthogonal to $f$}.
\end{align}
Generically, the second term is non-zero.
Denoting by $N_\perp := \| \t U{_\chi} f - (f, \t U{_\chi} f) f \| = \sqrt{1 - \rho_f(\chi)^2}$ its norm and setting
\begin{align}
    \label{eq:h definition}
    h =  \frac{1}{N_\perp} \, [\t U{_\chi} f - (f, \t U{_\chi} f) f],
\end{align}
one thus has
\begin{align}
	\label{eq:redshift f h decomposition}
	\t U{_\chi} f
		= (f, \t U{_\chi} f) f + (h, \t U{_\chi} f) h.
\end{align}
So far, this line of reasoning is compatible with that of the original work on this topic~\cite{2023AnP...53500468B,2023JPhCS2531a2016A,Leber_2026}.
However, contrary to what was concluded there, \cref{eq:redshift f h decomposition} does \emph{not} imply that there is a $2\times2$-matrix $\t u{_\chi}$ such that
\begin{align}
    \label{eq:closure 2x2}
	\begin{pmatrix}
		\t U{_\chi} f \\
		\t U{_\chi} h
	\end{pmatrix}
	=
    \t*u{_\chi^\transpose}
	\begin{pmatrix}
		f\\
		h
	\end{pmatrix}
    .
\end{align}
To give a counterexample, we proceed as follows.
If \cref{eq:closure 2x2} were to hold, then not only $\t U{_\chi} f$ would lie in the span of $f$ and $h$, but so would $\t U{_\chi} h$, and hence also $\t*U{_\chi^{-1}} f$. Since $f$ and $h$ are orthonormal and $\t U{_\chi}$ is unitary, the following two parameters would be  equal $1$ for all $\chi$:
\begin{subequations}
\begin{align}
    \label{eq:MMM parameter 1}
    \gamma_1 &:= |(f, \t U{_\chi} h)|^2 + |(h, \t U{_\chi} h)|^2,
    \\
    \label{eq:MMM parameter 2}
    \gamma_2 &:= |(f, \t*U{_\chi^{-1}} f)|^2 + |(h, \t*U{_\chi^{-1}} f)|^2.
\end{align}
\end{subequations}
However, as shown in \cref{fig:MMM_accuracy}, the constraints $\t\gamma{_1} = 1$ and $\t\gamma{_2} = 1$ are violated for Gaussian spectra whenever $\chi \neq 1$.
This finding confirms the result of the literature~\cite{Leber_2026}, that is, the violation of the unitarity relation \eqref{eq:closure 2x2} occurs at order $O((\chi-1)^2)$.
\begin{figure}
    \centering
    \includegraphics{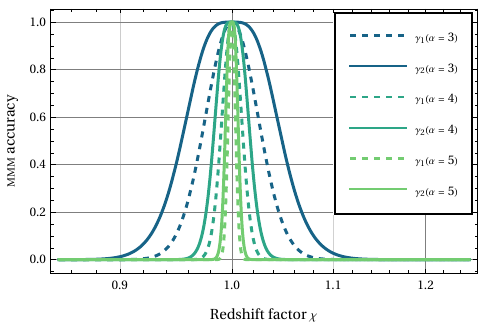}
    \caption{
        Plots of the parameters $\t\gamma{_1}$ and $\t\gamma{_2}$, defined in \cref{eq:MMM parameter 1,eq:MMM parameter 2}, that quantify the violation of the \MMM\ model.
        Here, $\alpha = \ln(\omega_0 / \sigma)$
        parametrizes the relative width of the Gaussian frequency spectra,
        where $\omega_0$ is the central frequency and $\sigma$ is the spectral width.
        Clearly, the \MMM\ prediction $\t\gamma{_1} = \t\gamma{_2} = 1$ is violated for all $\chi \neq 1$, with deviations from these values increasing with $\chi - 1$ and the “sharpness” of the Gaussian wave packets.
    }
    \label{fig:MMM_accuracy}
\end{figure}
More generally, it is clear that an equation of the form
\begin{align}  
	\label{eq:finite-dimensional representation}
	\begin{pmatrix}
		\t U{_\chi} \t f{_1}\\
		\vdots\\
		\t U{_\chi} \t f{_m}
	\end{pmatrix}
	=
    \t*u{_\chi^\transpose}
	\begin{pmatrix}
		\t f{_1}\\
		\vdots\\
		\t f{_m}
	\end{pmatrix}
\end{align}
implies that the spectra $\t f{_I}$ (which may comprise distributions $\t f{_i}$ of actual interest as well as any finite collection of auxiliary “perpendicular modes” $f_{\perp_k}$) span a subspace that is invariant under $\t U{_\chi}$ as it requires the action of $\t U{_\chi}$ on any linear combination of these $\t f{_I}$ to produce another linear combination of the same modes. Conversely, if $\t U{_\chi}$ has an invariant subspace of finite dimension, one can choose a finite orthonormal basis $\t f{_I}$ of this space and determine the matrix $\t u{_\chi}$ by computing the inner products $(\t f{_I}, \t U{_\chi} \t f{_J})$.

This argument shows that a formula of the kind \eqref{eq:finite-dimensional representation} is applicable if and only if $\t U{_\chi}$ has a finite-dimensional invariant subspace.
However, the following result shows that such spaces---if they exist---are necessarily
“irregular” in the following sense:
A space $W$ of spectra is said to be \emph{regular} if it has a basis $\{\t f{_i}\}$ such that for some $k > 0$, all $\t f{_i}$ have finite moments of order $k$, i.e., $\t\mu{_k}(\t f{_i}) < \infty$.
Conversely, $W$ will be said to be \emph{irregular} if there exits a basis $\{\t f{_i}\}$ such that for every $k > 0$ there is a basis element $\t f{_i}$ with $\t\mu{_k}(\t f{_i}) = \infty$.
With this definition at hand, we have the following result.

\begin{proposition}
	If, for any $\chi \neq 1$, $\t U{_\chi}$ leaves a finite-dimensional subspace $W \subset L^2(\mathbf R_+)$ invariant,
	then $W$ is irregular.
\end{proposition}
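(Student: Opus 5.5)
The idea is to combine the scaling law \cref{eq:moment transformation} with the fact that $\t U{_\chi}$ restricts to a unitary map on any finite-dimensional invariant subspace. Such a restriction can be diagonalized, and an eigenvector of $\t U{_\chi}$ cannot have a finite $k$-th moment, because $\t U{_\chi}$ multiplies that moment by $\chi^{2k}\neq 1$, whereas multiplication by a unimodular phase leaves it unchanged.

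First, I will show that the restriction $\t U{_\chi}|_W$ is a unitary operator on $W$. We have $\t U{_\chi}W\subseteq W$ and $\t U{_\chi}$ is injective, being unitary on $L^2(\mathbf R_+)$. Since $\dim W<\infty$, it follows that $\t U{_\chi}W=W$. The restriction preserves the inner product inherited from $L^2(\mathbf R_+)$, so it is unitary on $W$. By the finite-dimensional spectral theorem, $W$ then has an orthonormal basis $\{\t g{_j}\}$ of eigenvectors, with $\t U{_\chi}\t g{_j}=\lambda_j \t g{_j}$ and $|\lambda_j|=1$. I assume $W\neq\{0\}$, since otherwise the statement is vacuous.

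Second, I will rule out any finite moment for an eigenvector. Fix $k>0$ and any nonzero eigenvector $g$ with eigenvalue $\lambda$. The definition \cref{eq:k-th moment def} depends only on $|g|^2$, so $\t\mu{_k}[\lambda g]=\t\mu{_k}[g]$. On the other hand, \cref{eq:moment transformation} gives $\t\mu{_k}[\t U{_\chi} g]=\chi^{2k}\t\mu{_k}[g]$. Moreover $\t\mu{_k}[g]>0$, since $\omega^k>0$ almost everywhere on $\mathbf R_+$ and $g\neq 0$ in $L^2$. If $\t\mu{_k}[g]$ were finite, we would therefore get $\chi^{2k}=1$, which contradicts $\chi\neq1$ (with $\chi>0$) and $k>0$. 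Hence $\t\mu{_k}[\t g{_j}]=\infty$ for every $j$ and every $k>0$. The eigenbasis $\{\t g{_j}\}$ thus witnesses irregularity in the sense defined above, in fact in the strongest form: every basis element has infinite moments of every positive order.

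Third, for completeness I will show that $W$ is also not regular, so that the two notions are not simultaneously satisfied. Suppose some basis $\{\t f{_i}\}$ of $W$ had $\t\mu{_k}[\t f{_i}]<\infty$ for all $i$ and some fixed $k>0$. The map $f\mapsto\big(\int_0^\infty\dd\omega\,\omega^k|f(\omega)|^2\big)^{1/2}$ is a weighted $L^2$ seminorm and so satisfies the triangle inequality. Consequently every finite linear combination of the $\t f{_i}$, in particular every $\t g{_j}$, would have a finite $k$-th moment, contradicting the second step.

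I do not expect a serious obstacle. The only points needing care are:
\begin{itemize}
\item the strict positivity of $\t\mu{_k}[g]$ for nonzero $g$, which excludes the degenerate case $0=\chi^{2k}\cdot 0$;
\item the justification that invariance of a finite-dimensional subspace under an injective operator upgrades to $\t U{_\chi}W=W$, which is what makes the restriction unitary and hence diagonalizable.
\end{itemize}
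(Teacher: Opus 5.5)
Your proof is correct and follows the same route as the paper. You diagonalize $U_\chi|_W$ by the finite-dimensional spectral theorem, then combine the phase invariance of $\mu_k$ with the scaling law $\mu_k[U_\chi f]=\chi^{2k}\mu_k[f]$ to force every eigenvector to have infinite moments of every order $k>0$, and your extra checks fill in details the paper leaves implicit: that $U_\chi W=W$ so the restriction is genuinely unitary, that $\mu_k[g]>0$ and $\chi>0$, and the triangle-inequality argument excluding regularity, which the paper only makes later via Cauchy--Schwarz in the proof of the Theorem.
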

\begin{proof}
	If $W$ is a finite-dimensional subspace that is invariant under $\t U{_\chi}$,
	the finite-dimensional spectral theorem \cite[Sec. 8.5, Theorem 22]{1971_Hoffmann_Kunze} implies that $W$ admits an orthonormal basis of eigenvectors: $\t U{_\chi} \t f{_i} = \t \lambda{_i} \t f{_i}$
    (in fact, all $\t\lambda{_i}$ are unimodular, but this property is not of further use here).
	Now, by scaling-invariance of the moment functions $\t\mu{_k}$ and \cref{eq:moment transformation}, one has
	\begin{align}
		\label{eq:moment transformation eigenvectors}
		\t\mu{_k}[\t f{_i}]
			= \t\mu{_k}[\t\lambda{_i} \t f{_i}]
			= \t\mu{_k}[\t U{_\chi} \t f{_i}]
			= \chi^{2k} \t\mu{_k}[ \t f{_i}].
	\end{align}
	Since $\chi \neq 1$ by assumption, all moments $\t\mu{_k}[\t f{_i}]$  with $k > 0$ are infinite.
    Hence, $W$ is irregular.
\end{proof}

This result implies that finite collections of “well-behaved” mode spectra cannot span invariant subspaces of $\t U{_\chi}$ unless $\chi = 1$, which is in general physically uninteresting. This observation is made precise by the following result.

\begin{theorem}
	If $\mathcal{B}=\{\t f{_i}\}$ is any finite set of frequency spectra with finite moments of some order $k > 0$,
	then their span is invariant under $\t U{_\chi}$ if and only if $\chi = 1$.
\end{theorem}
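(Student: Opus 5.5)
The ``if'' direction is immediate: since $\t U{_{\chi=1}} = \identity$, every subspace, in particular $\linearHull\mathcal B$, is invariant. For the ``only if'' direction I will argue by contraposition and reduce to the Proposition. Suppose $\chi \neq 1$ and $W := \linearHull \mathcal B$ is invariant under $\t U{_\chi}$. Then $W$ is finite-dimensional, so the Proposition applies. The goal is to show that the moment hypothesis on $\mathcal B$ forces every nonzero element of $W$ to have a finite $k$-th moment. This contradicts what the proof of the Proposition gives us: a nonzero eigenvector of $\t U{_\chi}$ in $W$ whose moments of every positive order are infinite.

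The key step is that finiteness of the $k$-th moment passes to linear combinations. For $g = \sum_i c_i \t f{_i}$ I will bound the unnormalized moment pointwise, using $|\sum_i c_i \t f{_i}|^2 \le m \sum_i |c_i|^2 |\t f{_i}|^2$ (Cauchy--Schwarz, with $m = |\mathcal B|$). This gives
\begin{align*}
	\int_0^\infty \dd\omega\, \omega^k |g(\omega)|^2 \le m \sum_i |c_i|^2 \int_0^\infty \dd\omega\, \omega^k |\t f{_i}(\omega)|^2 < \infty .
\end{align*}
Hence $\t\mu{_k}[g] < \infty$ for every nonzero $g \in W$, since the denominator in \cref{eq:k-th moment def} is $\|g\|^2 > 0$. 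This step also removes any ambiguity about which basis of $W$ is used in the definitions of regular and irregular: the statement now holds for all vectors of $W$.

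Next I redo the eigenvector argument from the Proposition's proof. Pick an orthonormal eigenbasis $\t U{_\chi}\t f{_i}' = \t\lambda{_i} \t f{_i}'$ of $W$, which exists because $\t U{_\chi}$ restricted to the invariant finite-dimensional space $W$ is unitary, hence normal. By \cref{eq:moment transformation} and the invariance of $\t\mu{_k}$ under multiplication by a phase, each eigenvector satisfies $\t\mu{_k}[\t f{_i}'] = \chi^{2k}\t\mu{_k}[\t f{_i}']$. Because this quantity is finite by the previous step, it must vanish, since $\chi^{2k} \neq 1$.

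The point I expect to require care is that $\t\mu{_k}[g] = 0$ is itself impossible for $k>0$. It would require $\omega^k |g(\omega)|^2 = 0$ almost everywhere on $\mathbf R_+$, hence $g = 0$ in $L^2(\mathbf R_+)$, contradicting $\|\t f{_i}'\| = 1$. The only case left is $W = \{0\}$, which I will exclude by assuming, as is implicit, that the spectra in $\mathcal B$ are nonzero. So $W$ cannot be invariant when $\chi \neq 1$, which completes the contrapositive.
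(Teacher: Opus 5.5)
Your proof is correct and follows the same route as the paper's. Both show that finite $k$-th moments pass from $\mathcal B$ to an orthonormal eigenbasis $f'_i$ of $U_\chi$ on the invariant span; the paper bounds the expansion coefficients by a Cauchy--Schwarz-type estimate, while you use a pointwise bound. Both then use $\mu_k[f'_i] = \chi^{2k}\mu_k[f'_i]$ to force $\chi = 1$, and your explicit exclusion of $\mu_k[f'_i] = 0$ (because $\omega^k > 0$ on $\mathbf R_+$) and of the trivial case $W = \{0\}$ handles two small points the paper passes over silently.
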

\begin{proof}
	Since $\chi = 1$ implies $\t U{_\chi} = \identity$, the only non-trivial step is to prove that the invariance of $\linearHull(\mathcal{B})$ implies $\chi = 1$.
	As reasoned before, if $\linearHull(\mathcal{B})$ is invariant one can construct an orthonormal basis $\mathcal{B}':=\{f{^\prime_i}\}$ of mode functions $f{^\prime_i}$ that satisfy the eigenvalue equation $\t U{_\chi} \t*f{^\prime_i} = \t \lambda{_i} \t f{^\prime_i}$.
	Now, the Cauchy–Schwarz inequality implies
	\begin{align}
		\t\mu{_k}(\t*f{^\prime_i})
			\leq \left(
				\sum_j \t\mu{_k}(\t f{_j}) \,|\t \alpha{_i_j}|   \, \| \t f{_j} \|
			\right)^2,
	\end{align}
	where $\t \alpha{_i_j}$ are complex coefficients of the basis transformation $\t*f{^\prime_i} = \sum_j \t\alpha{_i_j} \t f{_j}$.
	Since the right-hand side is finite by assumption, all $\t*f{^\prime_i}$ have finite $k$-moments.
	However,  we then have that \cref{eq:moment transformation eigenvectors}, with $\t f{_i}$ replaced by $\t*f{^\prime_i}$, implies that $\chi = 1$.
\end{proof}

Since the divergence of all moments of the spectral distribution functions is physically implausible (e.g., for Gaussian distributions, which are commonly employed in many areas of physics, all moments are finite), considering typical scenarios of interest will inevitably imply the absence of finite-dimensional invariant subspaces of $\t U{_\chi}$ for $\chi \neq 1$.
As a consequence, the \MMM\ method for constructing unitary matrices from $\t U{_\chi}$ is generally not feasible.
In view of these considerations, it thus comes at no surprise that the matrices computed in the original work fail to be unitary~\cite{Leber_2026}.

\section{Resolution: Compressions and dilations}
\label{s:compressions and dilations}

Having established that the \MMM\ model, in its original formulation, is inconsistent, we now present two consistent alternatives:
\vspace{\baselineskip}

\noindent \emph{Self-contained approach}: The first approach is based on the observation that the observables considered in the original literature \cite{2023AnP...53500468B,2023JPhCS2531a2016A} can be computed solely using the matrix $\t u{_\chi} = (\t f{_i}, \t U{_\chi} \t f{_i})$, with $i, j = 1, \ldots, n$, whose non-unitarity implies no inconsistency of the model.

\vspace{\baselineskip}

\noindent \emph{Extension method}: The second approach is to extend $u{_\chi}$ to a unitary transformation $\t\Upsilon{_\chi}$ on a larger space by adding “auxiliary” degrees of freedom. The relevant observables for the field modes of interest can then be computed using reduced quantum states.

\vspace{\baselineskip}

As is explained in more detail below, the first approach arises naturally from the field-theoretic setting without \emph{ad-hoc} constructions.
The second approach, on the other hand, is conceptually closer to the original work on the \MMM\ model, as it introduces auxiliary degrees of freedom to describe redshifts using unitary matrices, which allows for the application of standard quantum-information-theoretic techniques (an example is provided in \cref{s:quantum channel}).
Both approaches work solely using finite-dimensional matrices and thus avoid the mathematical difficulties associated to working with infinite-dimensional matrices that were encountered previously~\cite{Leber_2026}.

\subsection{Compressions}
\label{s:compressions}

From a mathematical point of view, if $V = \linearHull \{\t f{_1}, \ldots, \t f{_n}\}$ is not closed under the action of $\t U{_\chi}$, it follows that the coefficients
\begin{align}
    \label{eq:compression coefficients}
    \t u{_i_j}(\chi)
		= (\t f{_i}, \t U{_\chi} \t f{_j})
\end{align}
cannot be regarded as the matrix elements of the \emph{restriction} of $\t U{_\chi}$ to $V = \linearHull \{\t f{_1}, \ldots, \t f{_n}\}$ precisely because $\t U{_\chi} \big|_V$ does not have $V$ as its codomain.
Instead, \cref{eq:compression coefficients} describes the coefficients of the operator
\begin{align}
    \label{eq:compression abstract}
    \t u{_\chi} = \t P{_V} \t U{_\chi} \big|_V,
\end{align}
where $\t P{_V}$ is the orthogonal projector onto $V$. As such, $\t u{_\chi}$ is a finite-dimensional \emph{compression} of $\t U{_\chi}$ rather than a \emph{restriction} \cite[Chap.~23]{Halmos:1982}, and our theorem above implies that $\t u{_\chi}$ is generically non-unitary.
As we will now show, this compression naturally arises in field-theoretic computations of single-photon transmission amplitudes.

Indeed, standard Fock quantization techniques yield ladder operators $\hat c$ and $\hat c^\dagger$ that are associated to Killing spectra \cite{1975RSPSA.346..375A,Fulling:1989}.
\Cref{eq:spectrum transformation} can then be used to define ladder operators $\hat a, \hat a^\dagger$ and $\hat b, \hat b^\dagger$ that are associated to stationary observers at positions $\t p{_\!_\alice}$ and $\t p{_\!_\bob}$, namely
\begin{align}
	\hat c(\t f{_*}) &= \hat a(\t f{_\alice}) = \hat b(\t f{_\bob}),
\end{align}
where
\begin{align}
	\t*f{_\AB}(\omega) &= \sqrt{\t\lapse{_\AB}}\, \t f{_*}(\t\lapse{_\AB} \,\omega ).
\end{align}
From \cref{eq:spectrum bob from alice,eq:U operator definition} it then follows that
\begin{align}
	\label{eq:ladder operator transformation}
	\hat a(f) = \hat b(\t U{_\chi} f),
\end{align}
where $\t U{_\chi}$ is defined in \cref{eq:U operator definition}, and $\chi = \sqrt{\t\lapse{_\alice} / \t\lapse{_\bob}}$ is the redshift parameter.
The gravitational frequency shift of wave packets can thus be described using \cref{eq:ladder operator transformation} without any reference to Killing spectra and also without a finite-dimensional matrix representation of $\t U{_\chi}$.
Crucially, when considering single-mode or multi-mode transmission protocols as described in the literature~\cite{2023AnP...53500468B,2023JPhCS2531a2016A}, the map \eqref{eq:compression abstract} and its coefficients \eqref{eq:compression coefficients} arise naturally, as we demonstrate in the following paragraphs. 

\emph{Single-Mode Transmission}---The probability amplitude of detecting a photon with frequency spectrum $f$ in the frame of observer $B$ if a photon with frequency spectrum $f$ was emitted by $A$ is given by
\begin{multline}
	\braket{0|\hat b(f) \hat a^\dagger(f)|0}
		= \braket{0|\hat b(f) \hat b^\dagger(\t U{_\chi} f)|0}
        \\
		= \braket{0|[\hat b(f), \hat b^\dagger(\t U{_\chi} f)]|0}
		= (f, \t U{_\chi} f)\equiv\Delta_f(\chi).
\end{multline}
Here we have introduced again $\Delta_f(\chi)$, which was defined previously in \cref{delta:definition}.
Of course, this equation can be derived using \cref{eq:redshift f h decomposition} with the supplementary spectrum $h$ defined in \cref{eq:h definition}:
\begin{align}
\begin{split}
    \label{eq:ladder operator relation}
	\hat a^\dagger(f)
		&= \hat b^\dagger(\t U{_\chi} f)
        \\&
		= (f, \t U{_\chi} f) \hat b^\dagger(f) + (h, \t U{_\chi} f) \hat b^\dagger(h).
\end{split}
\end{align}
However, the straight-forward calculation above shows that a supplementary orthogonal mode is not needed for making explicit predictions.
Hence, the single number $\Delta_f(\chi)=(f, \t U{_\chi} f)$ suffices to characterize the degree of distinguishability of the shifted spectrum $\t U{_\chi} f$ and the original spectrum $f$. In other words, a $2\times 2$-matrix is not needed.

\emph{Multi-Mode Transmission}---If the spectrum relative to $A$ is chosen from a predetermined collection  $(\t f{_i})_{i = 1}^n$ of mode functions, the probability amplitude for passing a filter from the same collection at $B$ is given by
\begin{multline}
    \label{eq:multi-mode transmission amplitudes}
	\braket{0|\hat b(\t f{_j}) \hat a^\dagger(\t f{_i})|0}
		= \braket{0|\hat b(\t f{_j}) \hat b^\dagger(\t U{_\chi} \t f{_i})|0}\\
		= \braket{0|[\hat b(\t f{_j}), \hat b^\dagger(\t U{_\chi} \t f{_i})]|0}
		= (\t f{_j}, \t U{_\chi} \t f{_i})
		.
\end{multline}
Hence, all relevant information is summarized in the coefficients defined in \cref{eq:compression coefficients}, or equivalently in the operator $\t u{_\chi}$ defined in \cref{eq:compression abstract}.
The probability of not matching any of the $\t f{_j}$ is thus given by $p(i) = 1 - \sum{_j} |\t u{_j_i}(\chi)|^2$.
Of course, this can be expressed as
\begin{align}
	p(i) = \braket{0|\hat b(\t k{_i}) \hat b^\dagger(\t f{_i})|0}
\end{align}
where $\t k{_i}$ is the perpendicular mode
\begin{align}
	\t k{_i}
		= 
	(\t U{_\chi} \t f{_i})^\perp
		= \t U{_\chi} \t f{_i} - \sum_{j = 1}^n (\t f{_j} , \t U{_\chi} \t f{_i})\, \t f{_j},
\end{align}
provided that the $\t f{_j}$ are orthonormal.
However, it should be clear from the discussion above that there is logically no need to extend the set $\t f{_i}$ by $\t k{_i}$, and hence to consider not only $\t U{_\chi} \t f{_i}$ but also $\t U{_\chi} \t k{_i}$, or even additional modes orthogonal to this extended collection of spectra.

These examples show that single-mode and multi-mode transmission protocols put forward in previous work~\cite{2023AnP...53500468B,2023JPhCS2531a2016A} can be understood entirely in terms of the compression \eqref{eq:compression abstract} with its matrix components \eqref{eq:compression coefficients}, whose non-unitarity simply comes from the fact that no finite collection of frequency spectra (with finite moments) is invariant under redshifts.

These examples also illustrate a general point concerning the original \MMM\ model:
the matrices considered there have the block structure
\begin{align}
    \label{eq:MMM block matrix}
    U_\MMM
    = 
    \begin{pmatrix*}
        \t u{_\chi} & *\\
        * & *
    \end{pmatrix*}.
\end{align}
The transmission amplitudes involving only “main” modes depend solely on the upper left block, $\t u{_\chi}$, hence the \MMM\ predictions for observables involving no auxiliary modes are consistent with quantum field theory in stationary space-times \emph{regardless} of the size of the frequency shift.
The problem of the model consists in its incorrect predictions concerning the auxiliary modes: if all matrix entries are chosen to be $(\t f{_I}, \t U{_\chi} \t f{_J})$, with $\t f{_I}$ and $\t f{_J}$ ranging over main and auxiliary spectra, $U_\MMM$ is generically not unitary (as proven above), regardless of the number of auxiliary modes.

\subsection{Dilations}
\label{s:dilations}

\Cref{eq:MMM block matrix} suggests a possible way of correcting the \MMM\ model by filling the remaining blocks in such a way as to obtain a unitary matrix. As described above, this comes at the cost of having to fill the matrix using an \emph{ad-hoc} selection procedure since the “natural” choice $(\t f{_I}, \t U{_\chi} \t f{_J})$ leads to a non-unitary matrix.
In the mathematical literature, such an “enlarged” operator is known as a \emph{unitary dilation}, i.e., a unitary operator $\t\Upsilon{_\chi}$ on a space $W$ with an embedding $V \hookrightarrow W$ such that $\t u{_\chi}$ is a compression of $\t\Upsilon{_\chi}$.
The recent article \cite{Leber_2026} proved that \emph{if} $\t u{_\chi}$ admits a unitary dilation to a space of dimension $n +n'$, then $n' \geq n$, but the existence of such a dilation was not proven there.
However, a result commonly attributed to Halmos shows that unitary dilations of $\t u{_\chi}$ always exist~\cite{Robinson:2018}: one may choose $W = V \oplus V$ and set
\begin{align}\label{redshift:matrix:dilation}
    \t\Upsilon{_\chi} =
        \begin{pmatrix}
            \t u{_\chi} & - \sqrt{\identity - \t*u{_\chi} \t*u{_\chi^\dagger}}
            \\
            + \sqrt{\identity - \t*u{_\chi^\dagger} \t*u{_\chi}} & \t*u{_\chi^\dagger}
        \end{pmatrix}.
\end{align}
Such a dilation provides an accurate representation of the inner products of the spectra $\t f{_i}$ with their shifted versions $\t U{_\chi} \t f{_i}$ that maintains unitarity by deviating from $\t U{_\chi}$ in all matrix components that are associated to supplementary spectra $\t h{_i}$.
This construction can be considered as a minimal correction to the \MMM\ model, which manifestly agrees with its original predictions for the transition amplitudes $(\t f{_i}, \t U{_\chi} \t f{_j})$ as discussed above.

Since many quantum-information-theoretic techniques require unitary transformations, the operator $\t\Upsilon{_\chi}$ makes it possible to use such methods to analyze the gravitational redshift in more detail. An example of such an analysis is described below.

\section{Gravitational redshift as a Gaussian quantum channel}
\label{s:quantum channel}

We now proceed to demonstrate how the unitary operator $\t\Upsilon{_\chi}$ can be used to model the gravitational redshift of Gaussian states as a quantum channel.
To this end, we first introduce a Hamiltonian $\hat H$ on a suitable Fock space such that the corresponding unitary evolution operator $\hat U$ correctly reproduces the multi-mode transmission amplitudes discussed in \cref{s:compressions and dilations}.
When applied to Gaussian states, it is trivial to “trace out” the auxiliary degrees of freedom associated to the dilation. As we demonstrate explicitly, the induced transformation on the “main” subsystem takes the form of a Gaussian quantum channel.
A graphical overview of the technical steps involved in this construction is provided in \cref{fig:flowchart}.

\begin{figure}[t]
    \centering
    \includegraphics{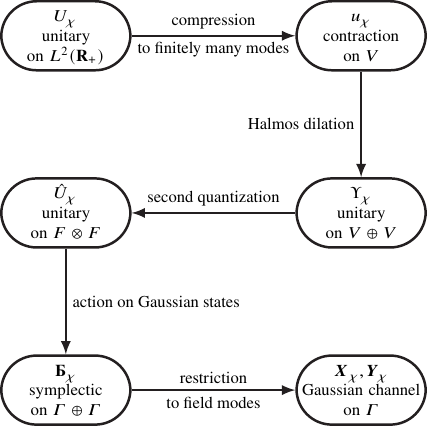}
    \caption{%
    Visual representation of the construction of the Gaussian quantum channel associated to the gravitational frequency shift $\t U{_\chi}$. By considering transition amplitudes among a finite number of modes, one obtains a non-unitary contraction $\t u{_\chi}$ on $V$, the span of the considered field modes. \Cref{redshift:matrix:dilation} extends this to a unitary operator $\t \Upsilon{_\chi}$ on the enlarged space $V \oplus V$. \Cref{eq:second quantization:hamiltonian,eq:second quantization:unitary} describe the corresponding operator $\t{\hat U}{_\chi}$ on the Fock space $F(V \oplus V) \simeq F(V) \otimes F(V)$. The induced action on Gaussian states gives a symplectic transformation $\t\Bogolyubov{_\chi}$ on the enlarged phase space $\varGamma \oplus \varGamma$, whose restriction is the quantum channel described in \cref{eq:Gaussian quantum channel result}.}
    \label{fig:flowchart}
\end{figure}

\subsection{Operators on Fock space}

We start by denoting $F(H)$ as the symmetric Fock space over a Hilbert space $H$,
and we consider $H$ to be the vector space $V$ spanned by $n$ field modes with frequency spectra $\t f{_i}$, endowed with the Klein–Gordon inner product.
The unitary operator \eqref{redshift:matrix:dilation} acts on $V \oplus V$, a space of dimension $N = 2n$, the associated Fock space of which is $F(V \oplus V) \simeq F(V) \otimes F(V)$.
Denoting by $\t\ha{_1}, \ldots, \t\ha{_{2n}}$ the associated ladder operators (the first $n$ of which correspond to the field’s degrees of freedom, the remaining $n$ to the auxiliary degrees of freedom), the Hamiltonian associated to $\t\Upsilon{_\chi}$ can be constructed as follows.
We use the matrix logarithm and set $\t h{_\chi} = \ii \ln \t\Upsilon{_\chi}$, which is a Hermitian matrix of dimension $N \times N = 2n \times 2n$, and the corresponding second-quantized operator is
\begin{align}
    \label{eq:second quantization:hamiltonian}
    \hat H(\chi) = \sum_{I = 1}^{N} \sum_{J = 1}^{N} \t h{_I_J}(\chi) \t*{\ha}{_I^\dagger} \t*{\ha}{_J}.
\end{align}
This construction is such that the unitary operator
\begin{align}
    \label{eq:second quantization:unitary}
    \hat U(\chi) = \exp(-\ii \hat H(\chi))
\end{align}
leaves the vacuum state invariant, $\hat U(\chi) \ket 0 = \ket 0$, since $\hat H(\chi)$ is normal-ordered, and transforms the ladder operators in the following manner
\begin{subequations}
\begin{align}
    \label{eq:unitary trf:lowering}
    \hat U^\dagger(\chi) \t*{\ha}{_I} \hat U(\chi)
        &= \sum_{J = 1}^{N} \t*\Upsilon{_I_J}(\chi) \t*{\ha}{_J},
    \\
    \label{eq:unitary trf:raising}
    \hat U^\dagger(\chi) \t*{\ha}{^\dagger_I} \hat U(\chi)
        &= \sum_{J = 1}^{N} \t*\Upsilon{_I_J^*}(\chi) \t*{\ha}{^\dagger_J}.
\end{align}
\end{subequations}
This construction gives an accurate model of the gravitational redshift of photons with frequency spectra $\t f{_1}, \ldots, \t f{_n}$,  as can be verified by computing the transition amplitudes
\begin{multline}
    \braket{0 | \t{\ha}{_i} \, \hat U(\chi) \t*{\ha}{^\dagger_j} | 0}
    = \braket{0 | \hat U^\dagger(\chi) \, \t{\ha}{_i} \, \hat U(\chi) \t*{\ha}{^\dagger_j} | 0}
    \\
    = \t\Upsilon{_i_j}(\chi)
    = \t u{_i_j}(\chi)
    = (\t f{_i}, \t U{_\chi} \t f{_j}),
\end{multline}
which reproduce the purely field-theoretic result \eqref{eq:multi-mode transmission amplitudes}.

\subsection{Gaussian quantum channels}
\label{s:Gaussian quantum channel}

Any Gaussian state on the Fock space over $V \oplus V$ is fully characterized by its first and second moments
\begin{align}
    \t d{_A} &= \braket{\t\Y{_A}},
    &
    \t\Sigma{_A_B} &= \braket{ \{\t*\Y{_A}, \t*\Y{_B^\dagger}\} } - 2 \braket{ \t*\Y{_A} } \braket{\t*\Y{_B^\dagger} },
\end{align}
where $\{\cdot,\cdot\}$ denotes the anti-commutator and $\Y = (\t{\ha}{_1}, \ldots, \t{\ha}{_N}, \t*{\ha}{_1^\dagger}, \ldots, \t*{\ha}{_N^\dagger})$, see, e.g., Ref.~\cite{Adesso:Ragy:2014}.
These expectation values can be written in matrix form as
\begin{align}
    \label{eq:gaussian d Sigma decomposition}
    \bs d &= \begin{pmatrix*}[l]
        \alpha\\
        \alpha^*
    \end{pmatrix*},
    &
    \bs\Sigma &= \begin{pmatrix*}[l]
        \sigma & \tau\\
        \tau^* & \sigma^*
    \end{pmatrix*},
\end{align}
where $\sigma$ is self-adjoint and $\tau$ is symmetric (both are matrices of size $N \times N$), and asterisks denote complex conjugation. In this notation, the reduced state of a subsystem is obtained simply by omitting the rows and columns associated to its complement.

As explained in more detail below, the unitary operator $\t\Upsilon{_\chi}$ introduced in \cref{redshift:matrix:dilation} naturally induces a transformation of the quantities $\bs d$ and $\bs \Sigma$ of the \emph{whole enlarged system}, which in turn allows us to characterize the process as a quantum channel acting on the reduced system corresponding to the original field modes of interest.

If $\varrho$ denotes the density matrix of a Gaussian state, the moments of the transformed state $\hat U(\chi) \varrho \hat U(\chi)^\dagger$ are
\begin{align}
    \label{eq:transformation alpha sigma tau}
    \alpha_\chi
        &= \t*\Upsilon{_\chi} \alpha_1,
    &
    \sigma_\chi
        &= \t*\Upsilon{_\chi} \t{\sigma}{_1} \t*\Upsilon{^\dagger_\chi},
    &
    \tau_\chi
        &= \t*\Upsilon{_\chi} \t{\tau}{_1} \t*\Upsilon{^\transpose_\chi}.
\end{align}
These equations can alternatively be expressed as
\begin{align}
    \label{eq:unitary transformation d Sigma}
    \bs d_\chi
        &= \t*{\bs\Bogolyubov}{_\chi} \bs d_1,
    &
    \bs \Sigma_\chi
        &= \t*{\bs\Bogolyubov}{_\chi} \bs \Sigma_1 \t*{\bs\Bogolyubov}{_\chi^\dagger},
\end{align}
where $\bs\Bogolyubov_\chi$ is the block-diagonal Bogolyubov transformation
\begin{align}
    \t*{\bs\Bogolyubov}{_\chi}
        = \begin{pmatrix*}[l]
            \Upsilon_\chi & 0\\
            0 & \Upsilon^*_\chi
        \end{pmatrix*}.
\end{align}
The absence of off-diagonal entries in $\t*{\bs\Bogolyubov}{_\chi}$ is consistent with the fact the operator $\t U{_\chi}$ defined in \cref{eq:U operator definition} does not mix positive and negative frequencies, or equivalently that \cref{eq:unitary trf:lowering,eq:unitary trf:raising} do not mix raising and lowering operators (i.e., no particle creation).

Now, because the modes $1, \ldots n$ describe the actual field modes while the modes $n + 1, \ldots, 2n$ are auxiliary, it is natural to decompose the matrices $\sigma_\chi$ and $\tau_\chi$ in block form as
\begin{align}
    \sigma
        &= \begin{pmatrix*}[l]
                \t*\sigma{_\sys}
            &   \t*\sigma{_\corr}
            \\
                \t*\sigma{_\corr^\dagger}
            &   \t*\sigma{_\aux}
        \end{pmatrix*},
    &
    \tau
        &= \begin{pmatrix*}[l]
                \t*\tau{_\sys}
            &   \t*\tau{_\corr}
            \\
                \t*\tau{_\corr^\dagger}
            &   \t*\tau{_\aux}
        \end{pmatrix*}.
\end{align}
Just as in the single-photon case described above, it is natural to consider initial states without auxiliary excitations, hence
\begin{subequations}
\begin{align}
    \t*\sigma{_\corr}(\chi = 1) &= 0,
    &
    \t*\tau{_\corr}(\chi = 1) &= 0,
    \\
    \t*\sigma{_\aux}(\chi = 1) &= \identity,
    &
    \t*\tau{_\aux}(\chi = 1) &= 0.
\end{align}
\end{subequations}
Since the auxiliary degrees are considered unobservable, it suffices to compute the final values of $\sigma_\sys$ and $\tau_\sys$. \Cref{eq:transformation alpha sigma tau} yields
\begin{subequations}
\begin{align}
    \t{\sigma}{_\sys}(\chi)
        &= \t*u{_\chi} \t{\sigma}{_\sys}(1) \t*u{_\chi^\dagger} 
         + \identity - \t*u{_\chi} \t*u{_\chi^\dagger},
    \\
    \t{\tau}{_\sys}(\chi)
        &= \t*u{_\chi} \t{\tau}{_\sys}(1) \t*u{_\chi^\transpose}.
\end{align}
\end{subequations}
When considering the field modes only, the gravitational frequency shift can thus be regarded as a Gaussian quantum channel $\t\channel{_\chi}$, see \cite{2005quant.ph..5151E}, of the form
\begin{align}
    \label{eq:Gaussian quantum channel result}
    \t\channel{_\chi}(\bf\Sigma_\sys)
        = \t*{\bs X}{_\chi} \bf\Sigma_\sys \t*{\bs X}{_\chi^\dagger} + \t*{\bs Y}{_\!_\chi},
\end{align}
where $\bs\Sigma_\sys$ is the $2n\times2n$ covariance matrix of the field modes with general form
\begin{align}
    \label{eq:covariance matrix decomposition}
    \bs\Sigma_\sys = \begin{pmatrix}
        \sigma_\sys & \tau_\sys
        \\
        \tau_\sys^* & \sigma_\sys^*
    \end{pmatrix},
\end{align}
and the $2n\times2n$ matrices $\bs X$ and $\bs Y$ take the form
\begin{align}
    \t{\bs X}{_\chi} &= \begin{pmatrix}
            \t*u{_\chi}
        &   0
        \\
            0
        &    \t*u{_\chi^*}
    \end{pmatrix},
    &
    \t{\bs Y}{_\!_\chi} &= \begin{pmatrix}
            \identity - \t*u{_\chi} \t*u{_\chi^\dagger} 
        &   0
        \\
            0
        &   \identity - \t*u{_\chi^*} \t*u{_\chi^\transpose}
    \end{pmatrix}.
\end{align}
This result corroborates the fact that the whole evolution of the reduced state of the modes of interest is determined uniquely by the matrix $u_\chi$ of measurable coefficients $\t u{_i_j}(\chi) = ( \t f{_i}, \t U{_\chi} \t f{_j})$. Thus, in any practical scheme no knowledge of the auxiliary modes $\t h{_j}$ is needed.

\subsection{Quantifying non-unitarity of the redshift-induced Gaussian quantum channel}
\label{s:Gaussian quantum channel:non-unitarity:quantification}

We now wish to find a way to quantify the deviation from unitarity of $\t u{_\chi}$. Below we provide two complementary methods to achieving this goal.

\textit{Fidelity measures}---A standard way of comparing two quantum states with density operators $\hat\rho$ and $\hat\rho'$ is to employ the fidelity $\fidelity$~\cite{Nielsen:Chuang:2000}, defined as
\begin{align}
    \label{quantum:state:fidelity}
    \mathcal{F}(\hat{\rho},\hat{\rho}')
        =
        \left(\trace\left[
            \sqrt{\sqrt{\hat\rho} \, \hat\rho' \, \sqrt{\hat\rho}}
        \right]\right)^2.
\end{align}
For pure states $\hat\rho = \ketbra{\psi}{\psi}$ and $\hat\rho'= \ketbra{\psi'}{\psi'}$ the fidelity reduces to $\mathcal F(\hat\rho, \hat\rho') = |\braket{\psi \mid \psi'}|^2$.

In the particularly simple case of two-mode Gaussian states with vanishing first moments, the fidelity can be expressed in terms of the respective covariance matrices $\bs\Sigma$ and $\bs\Sigma'$ \cite{Paraoanu:Scutaru:2000,Marian:Marian:2012}. Explicitly, one has
\begin{align}
    \fidelity(\bs\Sigma, \bs\Sigma')
        = \frac{4}{ \sqrt\Lambda + \sqrt\Gamma-\sqrt{(\sqrt\Lambda + \sqrt\Gamma)^2 -\Delta}},
\end{align}
where
\begin{subequations}
\begin{align}
    \Gamma &= \det(\identity_4 - \bs\Omega \, \bs\Sigma \, \bs\Omega \, \bs\Sigma'),
    \\
    \Lambda &=\det(\identity_4 +\ii \,\bs\Omega \, \bs\Sigma )  \det(\identity_4 + \ii\,\bs\Omega \, \bs\Sigma'),
    \\
    \Delta &= \det(\bs\Sigma + \bs\Sigma'),
    \\
    \intertext{in which}
    \bs\Omega
        &= \operatorname{diag}(
            \underbrace{-\ii, \ldots, -\ii}_{\text{$N = 2n$ times}},
            \underbrace{+\ii, \ldots, +\ii}_{\text{$N = 2n$ times}}
        ).
\end{align}
\end{subequations}
It is immediate to verify that $\Lambda = 0$ when either $\hat\rho$ or $\hat\rho'$ is pure \cite{Heib:Lageyre:2025}.
Similarly, using Williamson’s theorem \cite{Williamson:1936}, one can demonstrate that purity of either one of the states implies $\Gamma = \Delta$ (see, e.g., App.~B of Ref.~\cite{Heib:Lageyre:2025}).
% Using the same arguments presented in the literature, it is immediate to verify that, when $\boldsymbol{\sigma}=\boldsymbol{s}_0\boldsymbol{s}_0^\dag$ is pure and uniquely determined by the symplectic matrix $\boldsymbol{s}_0$, one has
% \begin{align}
%     \begin{split}
%     \Delta=&\det(\boldsymbol{\sigma}+\boldsymbol{\sigma}')\\
%     =&\det(\boldsymbol{s}_0\boldsymbol{s}_0^\dag+\boldsymbol{\sigma}')\\
%     =&\det(\mathds{1}+\boldsymbol{s}_0^\dag{}^{-1}\boldsymbol{s}_0{}^{-1}\boldsymbol{\sigma}')\\
%     =&\det(\mathds{1}-\boldsymbol{\Omega}\boldsymbol{s}_0\boldsymbol{s}_0^\dag\boldsymbol{\Omega}\boldsymbol{\sigma}')\\
%     =&\det(\mathds{1}-\boldsymbol{\Omega}\boldsymbol{\sigma}\boldsymbol{\Omega}\boldsymbol{\sigma}')=\Gamma.
%     \end{split}
% \end{align}
% We conclude that it is sufficient that one of the states is pure to have $\Delta=0$ and $\Lambda=\Gamma$. 	
Thus, in such a case the fidelity takes the simple form
\begin{align}
    \fidelity(\bs\Sigma, \bs\Sigma')
        = \frac{4}{\sqrt{\Delta}}
        = \frac{4}{\sqrt{\det(\bs\Sigma + \bs\Sigma')}}.
\end{align}
Hence, the action of gravitational redshift on an initially pure state with covariance matrix $\bs\Sigma$ yields a final state with covariance matrix $\t\channel{_\chi}(\bs \Sigma)$ whose fidelity relative to the initial state, $\t\fidelity{_\chi}(\bs\Sigma) := \fidelity(\bs\Sigma, \t\channel{_\chi}(\bs\Sigma))$, takes the form
\begin{align}
    \label{eq:fidelity redshifted pure state}
    \t\fidelity{_\chi}(\bs\Sigma)
        = \frac{4}{\sqrt{\det( \bs\Sigma + \t*{\bs X}{_\chi} \bs\Sigma \t*{\bs X}{_\chi^\dagger} + \t*{\bs Y}{_\chi})}}.
\end{align}
If the initial state is pure, the Bloch–Messiah theorem implies that its covariance matrix can be written as ${\bs\Sigma = (\bs V \bs \Xi) (\bs V \bs \Xi)^\dagger}$, where $\bs V$ is unitary and $\bs \Xi$ is a Hermitian squeezing matrix \cite{Braunstein:2005}. Using this factorization, the fidelity can be written as
\begin{align}
    \label{eq:fidelity redshift pure squeezing}
    \t\fidelity{_\chi}(\bs\Sigma)
        = \frac{4}{\sqrt{\det( \bs\identity + \bs\Xi^2 + \bs{\tilde X}(\bs\Xi^2 - \bs\identity) \bs{\tilde X}^\dagger) }},
\end{align}
where $\t{\bs{\tilde X}}{_\chi} = \bs V \t{\bs X}{_\chi} \bs V^\dagger$.
Hence, in the absence of squeezing, $\bs \Xi = \bs \identity$, one has $\t\fidelity{_\chi} = 1$ regardless of the redshift parameter $\chi$, and therefore lower values $\t\fidelity{_\chi} < 1$ can be obtained only through squeezing.
This establishes squeezing as a paramount resource for measuring fidelity reductions induced by gravitational frequency shifts.

\textit{Quantification of the noise term}---An alternative, state-independent, measure for the effective non-unitarity in this setup is provided by the normalized trace $\eta(\chi)
        := \frac{1}{2 n} \trace \t{\bs Y}{_\!_\chi}$ of the matrix $\t{\bs Y}{_\!_\chi}$, which we call \emph{channel nonunitarity}, that explicitly reads
\begin{equation}
    \begin{split}
    \eta(\chi)
        &= 1 - \frac{1}{n} \sum_{i, j = 1}^n |\t u{_i_j}(\chi)|^2
        \\
        &= 1 - \frac{1}{n} \sum_{i, j = 1}^n |(\t f{_i}, \t U{_\chi} \t f{_j})|^2.
    \end{split}
\end{equation}

\noindent The Cauchy–Schwarz inequality implies that $\eta(\chi) \geq 0$, and it is easy to see that the equality $\eta(\chi) = 0$ is equivalent to $\t u{_\chi}$ being unitary.
By virtue of our main theorem, however, this occurs in realistic scenarios if and only if $\chi = 1$, so generically non-trivial frequency shifts imply $\eta(\chi) \neq 0$.

Crucially, due to the properties of the transformation $U_\chi$, we have $\eta(\chi) = \eta(1/\chi)$, which we write as
\begin{equation}\label{up:down:relation}
    \eta_{\uparrow}(\chi)=\eta_{\downarrow}(\chi).
\end{equation}
This holds for any pair of uplink and downlink processes (labelled with an upward and downward pointing arrow respectively).

\subsection{Roundtrip quantum communication protocol}

Besides comparing quantum states relative to different observers $\alice$ and $\bob$, it is also possible to characterize the quantum channel associated to the round-trip $\alice \to \bob \to \alice$.
Crucially, the state received at the end at $\alice$ depends on the details of the reflection at $\bob$. Clearly, a reflection by an “ideal mirror” at $\bob$ merely reverses the direction of propagation but leaves the frequency spectrum invariant, so that the overall transformation is $\t U{_{1/\chi}} \t U{_\chi} = \identity$ on account of \cref{eq:U composition}. However, if the reflection at $\bob$ occurs at an “imperfect” mirror that reflects only part of the spectrum, the state received at $\alice$ will differ from the one emitted at the same spot. 

As a simple model, one may consider the case where perfect reflection occurs if the frequency spectrum at $\bob$ is within the span of some finite set of frequency spectra $\t f{_i}$, and orthogonal spectra are fully absorbed. 
Since $\t u{_{1/\chi}} = \t*u{_\chi^\dagger}$, the Gaussian quantum channel for such a “lossy” round-trip $\alice \to \bob \to \alice$ takes the form
\begin{align}
    \t{\bs \Sigma}{_\sys}
        \mapsto \t*{\bs Z}{_\chi} \t{\bs \Sigma}{_\sys} \t*{\bs Z}{_\chi^\dagger}
        + \bs\identity - \t*{\bs Z}{_\chi} \t*{\bs Z}{_\chi^\dagger},
\end{align}
where
\begin{align}
    \t*{\bs Z}{_\chi}
        = \t*{\bs Z}{_\chi^\dagger}
        = \t*{\bs X}{_\chi^\dagger} \t*{\bs X}{_\chi}
        = \begin{pmatrix}
            \t*u{_\chi^\dagger} \t*u{_\chi} & 0\\
            0 & \t*u{_\chi^\transpose} \t*u{_\chi^*}
        \end{pmatrix}.
\end{align}
Evidently, this differs from the identity transformation whenever $\chi \neq 1$ because $\t u{_\chi}$ cannot be unitary for non-trivial frequency shifts. A pictorial illustration of this scheme is given in Figure~\ref{fig:schematic ABA}.

\begin{figure}
    \centering
    \includegraphics[width=\linewidth]{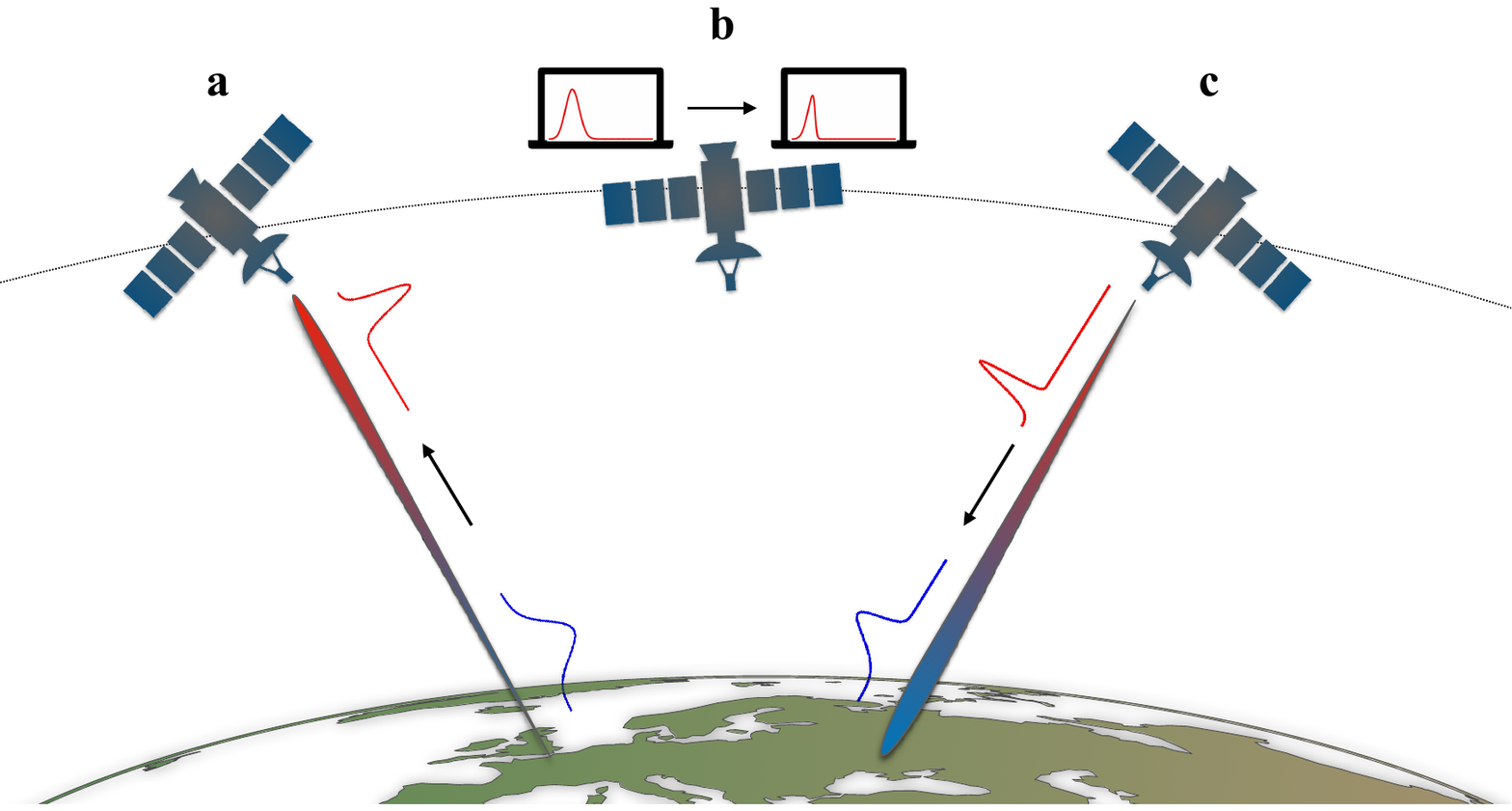}
    \caption{
        \textbf{Round-trip protocol}: A photon is sent from site \alice\ to site \bob, where the received frequency spectrum deviates from the one emitted by \alice\ (Panel a).
        At site \bob, the wave-packet’s frequency spectrum is modified by means of imperfect reflection or an appropriate projective measurement (Panel b).
        If the transformed wave packet is sent back to \alice, the received photon state differs from the one emitted before (Panel c).
        Note that the present mathematical analysis assumes stationary observers and thus applies (approximately) to geostationary satellites, but not to other satellite orbits that move relative to a ground station.
    }\label{fig:schematic ABA}
\end{figure}

We can quantify the deviation from unitarity in this simple roundtrip scheme by using the channel nonunitarity $\eta$. We find that the roundtrip channel nonunitarity $\eta_\updownarrow$ reads
\begin{equation}
    \eta_\updownarrow(\chi)
        = 1 - \frac{1}{n} \sum_{i, j,k = 1}^n |\t*u{_i_j^*}(\chi) \t*u{_j_k}(\chi)|^2.
\end{equation}
By expanding the sum in an eigenbasis of $\t*u{_\chi^\dagger} \t u{_\chi}$, one finds
\begin{align}
    \eta_\updownarrow(\chi)
        \geq \eta_\uparrow(\chi)
        \geq \eta_\uparrow(\chi)^2,
\end{align}
where $\eta_\uparrow(\chi) = 1 - \frac{1}{n} \sum_{i, j = 1}^n |\t u{_i_j}(\chi)|^2$ is the non-unitarity of the one-way uplink. An identical equation applies for the downlink: this follows directly from \eqref{up:down:relation}. Since the channel nonunitarity is always larger than zero, this implies that a round-trip process is “more nonunitary” than the one-way process when a filter is applied at the middle node.

\section{Considerations}
\label{s:MMM review of results}

The observation that $\t u{_i_j}(\chi)$ is not unitary for $\chi \neq 1$ prompts the question whether strong redshifts preserve the photon number relative to stationary observers~\cite{Leber_2026}.
The present analysis answers this affirmatively: \cref{eq:ladder operator relation} shows that if an observer \alice\ creates the single-photon state $\ha^\dagger(f) \ket 0$, the same state written in a basis adapted to another stationary observer \bob\ is given by $\hat b^\dagger(\t U{_\chi} f) \ket 0$, where $\t U{_\chi}$ is the operator defined in \cref{eq:U operator definition} (the present analysis uses the Heisenberg picture, in which state vectors \emph{do not} evolve).
The transformation $\t U{_\chi}$ of complex frequency spectra is unitary on the whole infinite-dimensional space of mode functions, however, computing “matrix elements” $\t u{_i_j}(\chi)$ for a finite collection of modes generically leads to non-unitary matrices because the frequency-shift transformation $\t U{_\chi}$ has no regular finite-dimensional invariant subspaces.
Nevertheless, the non-unitarity of the resulting matrix representation does not imply any photon loss, let alone an inconsistency of the model: it is a logical consequence of the explicit form of the frequency-shift operator $\t U{_\chi}$.

The present analysis partially confirms and partially corrects results obtained in the previous articles \cite{2023AnP...53500468B,2023JPhCS2531a2016A,Leber_2026}.
Specifically, given that all previous work agree on the upper left block of the matrix \eqref{eq:MMM block matrix}, this work confirms the prior results on single-photon transmission amplitudes and probabilities involving field modes.
Since the auxiliary modes introduced in prior analyses were assumed experimentally inaccessible, the remaining entries of the matrices considered there do not lead to experimentally falsifiable predictions for single-photon transmission. Furthermore, the prior analyses were restricted to weak gravitational fields, while our analysis extends these results to \emph{arbitrarily strong} gravitational frequency shifts.

We emphasize that it was previously assumed to be sufficient to add \emph{a single} auxiliary mode to obtain a unitary matrix, from which one could then construct an associated Bogolyubov matrix to study the gravitational redshift of Gaussian states~\cite{2023JPhCS2531a2016A}. However, it was then shown that adding a single auxiliary mode \emph{cannot} yield a unitary matrix unless $\chi = 1$, see~\cite{Leber_2026}.
Furthermore, in the latter work it was also shown that \emph{if} a matrix of the form \eqref{eq:MMM block matrix} is unitary, then the added blocks must have at least the same dimensions as $\t u{_\chi}$. However, it was not proven that suitable choices of the sub-blocks in \cref{eq:MMM block matrix} actually lead to a unitary matrix. Moreover, due to the lack of explicit expressions, concrete calculations based on such matrices were effectively intractable. It is only because of the explicit dilation formula \eqref{redshift:matrix:dilation} proposed in this work that we were able to determine the Gaussian quantum channel \eqref{eq:Gaussian quantum channel result} analytically, and to provide an operational way of computing the relevant elements of the transformation matrices. In this sense, the fact that for all purposes it is not necessary to access any “perpendicular modes” makes our results readily applicable to any concrete scenario.

\section{Conclusions}
\label{s:conclusions}

We have analyzed the effects of the gravitational frequency shift of light within the framework of quantum field theory in stationary space-times. In contrast to the previously developed multi-mode mixer (\MMM) model, we demonstrated that the non-unitarity of the transformation matrices highlighted in previous work is due to the fact that the frequency-shift operator \eqref{eq:U operator definition} admits no finite-dimensional invariant subspaces (at least if one requires the field modes to have finite moments), regardless of the magnitude of the frequency shift.
To correct the issues of the \MMM\ model, we presented two alternatives:

\vspace{\baselineskip}

\noindent  \textit{Self-contained approach}: One option is to avoid using matrix representations in the first place.
    This is made possible by using an appropriate notation, which makes it explicit that ladder operators are in one-to-one correspondence with frequency spectra and that their commutation relations are determined by the Klein–Gordon products of these spectra. As a consequence, transition amplitudes for single-photon (and multi-photon) states can be computed as in \cref{eq:multi-mode transmission amplitudes}.
    This approach makes it evident that gravitational frequency shifts in stationary space-times preserve the photon number. Thus, the apparent non-unitarity found in the original version of the \MMM\ is merely a consequence of the projection of the frequency-shift operator onto a finite collection of modes.

\vspace{\baselineskip}

\noindent  \textit{Extension method}: An alternative approach that is conceptually closer to the original \MMM\ is to extend the non-unitary matrix $\t u{_i_j}(\chi)$ of transmission amplitudes to a unitary matrix on a larger space using \cref{redshift:matrix:dilation}. The key observation here is that the overall matrix \emph{cannot} be of the form $(\t f{_I}, \t U{_\chi} \t f{_J})$, where $\t f{_I}$ and $\t f{_J}$ range over field modes of interest and suitably chosen supplementary modes (no matter how they are chosen), for this would again lead to a non-unitary matrix. Instead, to achieve unitarity, the added entries must deviate from this form.
The unitary matrix \eqref{redshift:matrix:dilation} can then be used to apply standard quantum-optical and quantum-information-theoretical methods, as exemplified in \cref{s:Gaussian quantum channel}, where the gravitationally induced dynamics of a finite number of field modes is obtained by “tracing out” the supplementary degrees of freedom that were added in the dilation passing from $\t u{_\chi}$ to $\t\Upsilon{_\chi}$. 

The fact that the operator $\t u{_\chi}$ defined in \cref{eq:compression abstract} is a contraction implies that the effective Gaussian quantum channel derived in \cref{s:Gaussian quantum channel} is non-unitary. Experimentally, this could be verified by observing the fidelity of a Gaussian state compared to its frequency-shifted partner. Assuming the initial state to be pure, we demonstrated that reduction in the fidelity is observably only when the initial state is squeezed, which in turn can be seen as yet another confirmation that the observational signatures of gravitational frequency shifts depend on the quantum properties of the quantum state of light \cite{2021PhRvD.104h5015B,2023AnP...53500468B}.

We also introduced a measure of non-unitarity of the channel, termed the \emph{channel nonunitarity}, which can be computed using measurable quantities only. Contrary to the fidelity, this measure always detects deviations from unitarity irrespective of the initial state.

\vspace{\baselineskip}

Our construction of finite-dimensional unitary matrices associated to the gravitational frequency shift (of arbitrary magnitude) paves the way for further analyses of gravitational influences on light transmission in the language of quantum channels. This enables, for example, studies of the associated quantum channel capacity and its dependence on the gravitational as well as quantum features of the system, which is paramount to understanding relativistic effects in satellite-based quantum information protocols.

\vspace{-\baselineskip}
\section*{Acknowledgments}

We acknowledge Gerardo Adesso for useful suggestions.
TM was partly funded by the Deutsche Forschungsgemeinschaft (DFG, German Research Foundation) – project number \href{https://gepris.dfg.de/gepris/projekt/548528714}{548528714}.

\section*{Author contribution statement}

TM identified the source of non-unitarity in the \MMM\ model and developed the compression-approach for its correction, while the dilation-approach was suggested by DEB and AWS.
DEB proposed to study the effects using the quantum-channel formalism.
All authors contributed to writing the manuscript.

\bibliography{bibliography.bib}
\end{document}